\newtheorem{theorem}{Theorem}
\newtheorem{lemma}[theorem]{Lemma}
\newtheorem{definition}[theorem]{Definition}
\newcommand{\sket}[1]{{\ensuremath{\lvert#1\rangle}}}
\newcommand{\lket}[1]{{\ensuremath{\left\lvert#1\right\rangle}}}
\newcommand{\ket}[1]{\mathchoice{\lket{#1}}{\sket{#1}}{\sket{#1}}{\sket{#1}}}
\newcommand{\sbra}[1]{{\ensuremath{\langle#1\rvert}}}
\newcommand{\lbra}[1]{{\ensuremath{\left\langle#1\right\rvert}}}
\newcommand{\bra}[1]{\mathchoice{\lbra{#1}}{\sbra{#1}}{\sbra{#1}}{\sbra{#1}}}
\newcommand{\sketbra}[2]{{\ensuremath{\lvert #1\rangle\langle #2\rvert}}}
\newcommand{\lketbra}[2]{{\ensuremath{\left\lvert #1 \middle\rangle\middle\langle #2\right\rvert}}}
\newcommand{\ketbra}[2]{\mathchoice{\lketbra{#1}{#2}}{\sketbra{#1}{#2}}{\sketbra{#1}{#2}}{\sketbra{#1}{#2}}}
\newcommand{\proj}[1]{\ketbra{#1}{#1}}
\newcommand{\ident}{\mathbbm{1}}
\DeclareMathOperator{\tr}{\mathrm{Tr}}
\DeclareMathOperator{\Herm}{Herm}
\newcommand{\hmin}{H_{\min}}
\newcommand{\mbC}{\mathbb{C}}
\newcommand{\mbE}{\mathbb{E}}
\newcommand{\mbR}{\mathbb{R}}
\newcommand{\sfA}{\mathsf{A}}
\newcommand{\sfB}{\mathsf{B}}
\newcommand{\sfC}{\mathsf{C}}
\newcommand{\sfE}{\mathsf{E}}
\renewcommand{\otimes}{\varotimes}
\newcommand{\linop}{\mathrm{L}}
\newcommand{\Pos}{\mathrm{Pos}}
\newcommand{\Normal}{\mathrm{N}}
\newcommand{\D}{\mathrm{D}}
\DeclareMathOperator{\Supp}{supp}
\newcommand{\wt}{\mathrm{wt}}
\tikzstyle{porte} = [fill=blue!20, draw]
\tikzstyle{portepale} = [fill=blue!10, draw=black!50, text=black!50, minimum width=1cm]
\tikzstyle{unitaire} = [fill=red!20, draw]
\tikzstyle{etatcarre} = [fill=green!20, draw]
\tikzstyle{etiquette} = [font=\scriptsize]
\tikzstyle{ket} = [shape=semicircle, shape border rotate=270, fill=green!20, draw]
\tikzstyle{bra} = [shape=semicircle, shape border rotate=90, fill=green!20, draw]
\tikzstyle{etat} = [fill=yellow!20, draw]
\begin{document}
\cleanlookdateon
\title{Privacy amplification and decoupling without smoothing}
\author{Frédéric Dupuis\\ \textit{Département d'informatique et de recherche opérationnelle}\\ \textit{Université de Montréal}\\ \textit{Montréal, Québec}}


\maketitle

\begin{abstract}
    We prove an achievability result for privacy amplification and decoupling in terms of the sandwiched Rényi entropy of order $\alpha \in (1,2]$; this extends previous results which worked for $\alpha=2$. The fact that this proof works for $\alpha$ close to 1 means that we can bypass the smooth min-entropy in the many applications where the bound comes from the fully quantum AEP~\cite{tcr08} or entropy accumulation~\cite{dfr16}, and carry out the whole proof using the Rényi entropy, thereby easily obtaining an error exponent for the final task. This effectively replaces smoothing, which is a difficult high-dimensional optimization problem, by an optimization problem over a single real parameter $\alpha$.
\end{abstract}

\section{Introduction}\label{sec:intro}
Consider the following scenario: we have a bipartite quantum state $\rho_{AE}$, with $A$ held by Alice and $E$ by an adversary Eve. Alice would like to apply an operation to $A$ to obtain a $C$ system that is uniform and independent from $E$. Here an ``operation'' means choosing a channel $\mathcal{R}^h$ at random from a family $\{ \mathcal{R}^h_{A \rightarrow C} : h \in \mathcal{H} \}$, announcing $h$ publicly and applying it to $A$. We are interested both in the case where $A$, $\mathcal{R}^h$ and $C$ are classical and in the case where they are quantum; the classical case corresponds to privacy amplification, and the quantum case to decoupling. In both cases, we would like to maximize the size of $C$ (in number of bits or qubits) that we get as a function of the input state $\rho_{AE}$. We have achievability results in the classical~\cite{KMR03,renner-phd} and quantum~\cite{state-merging,FQSW,dbwr10,fred-these} cases of the form
\begin{align}
    \mbE_h \left\| \mathcal{R}^h(\rho_{AE}) - \frac{\ident_C}{|C|} \otimes \rho_E \right\|_1 \leqslant 2^{\frac{1}{2}(\log |C| - H_2(A|E)_\rho)},
\end{align}
where $H_\alpha(A|E)_\rho$ denotes the \emph{sandwiched Rényi conditional entropy} of order $\alpha \in [\frac{1}{2}, 1) \cup (1, \infty)$ of $A$ given $E$ (see below for the definition). However, while the Rényi 2-entropy is what naturally comes out of these proofs, it has mostly been used as an auxiliary quantity, due to one major flaw: it is very sensitive to small variations in the state $\rho_{AE}$ over which it is computed, and as a result we can rarely get good bounds on it in most settings of interest. To remedy this, we instead use the min-entropy $\hmin(A|E)_\rho$ (which is a lower bound on the 2-entropy and has additional desirable properties---again, see below for the definition), and \emph{smooth} it, which means that we compute it on the best state in an $\varepsilon$-ball\footnote{The distance measure $D(\cdot,\cdot)$ used here is called the purified distance; its precise definition will not be needed here. Its definition and various properties can be found in, e.g.~\cite{livre-t15}} around $\rho$:
\[ \hmin^\varepsilon(A|B)_\rho := \sup_{\theta_{AB} : D(\rho,\theta) \leqslant \varepsilon} \hmin(A|B)_\theta. \]
This yields an achievability result of the form:
\begin{align}\label{eqn:bound-smooth-hmin}
    \mbE_h \left\| \mathcal{R}^h(\rho_{AE}) - \frac{\ident_C}{|C|} \otimes \rho_E \right\|_1 \leqslant 2\varepsilon + 2^{\frac{1}{2}(\log |C| - \hmin^\varepsilon(A|E)_\rho)}.
\end{align}
How can we get good lower bounds on $\hmin^\varepsilon$ in applications of interest? One simple case involves states where $A$ and $E$ both consist of $n$ independent, identically distributed systems $A_1^n$ and $E_1^n$. This is a major ``use case'', since many problems of interest (such as quantum key distribution protocols) can be reduced to it via the de Finetti theorem~\cite{renner-phd,r07}. In~\cite{tcr08}, the \emph{fully quantum asymptotic equipartition property} is proven, which states a lower bound in terms of the von Neumann entropy of a single system:
\[ \hmin^\varepsilon(A_1^n|E_1^n)_\rho \geqslant n H(A_1|E_1)_\rho - O(\sqrt{n}). \]
The proof works by first lower-bounding $\hmin^\varepsilon(A_1^n|E_1^n)$ by $H_\alpha(A_1^n|E_1^n) = nH_\alpha(A_1|E_1)$, and then lower-bounding $H_\alpha$ by $H$. Choosing $\alpha \approx 1 + \frac{1}{\sqrt{n}}$ then yields the bound we want. The same blueprint also works in more complicated settings: in~\cite{dfr16}, the entropy accumulation theorem (EAT), which can be used to bound the smooth min-entropy of states that are produced by step-by-step processes such as those arising in device-independent QKD protocols, also involves bounding the Rényi entropy for $\alpha$ just above 1.


In this paper, we show how to bypass the smooth min-entropy entirely in this type of proof and use a bound on the Rényi entropy of order $\alpha \in (1, 2]$ directly. In other words, we show that:
\begin{align}
    \mbE_h \left\| \mathcal{R}^h(\rho_{AE}) - \frac{\ident_C}{|C|} \otimes \rho_E \right\|_1 \leqslant 2^{\frac{2}{\alpha}-1} \cdot 2^{\frac{\alpha-1}{\alpha}(\log |C| - H_\alpha(A|E)_\rho)}.\label{eqn:main-intro}
\end{align}
This simplifies the overall analysis considerably: one gets much cleaner expressions, we can get error exponents quite easily on the final quantity of interest that are better than those that can be obtained via smoothing\footnote{See~\cite[Section III.D]{h10} for a comparison between error exponents obtained via the Rényi approach and the smooth min-entropy approach in the iid setting.}, and we replace smoothing (which is a difficult high-dimensional optimization problem) by an optimization over a single real parameter $\alpha$. This can then be applied to a wide variety of problems, ranging from privacy amplification in QKD, to decoupling~\cite{dbwr10,fred-these} in the quantum case, which in turn yields results in quantum channel coding. Furthermore, this approach raises an interesting conceptual question: can we do all of one-shot information theory using Rényi-type quantities instead of min/max quantities and smoothing?

The proof of the result is based on norm interpolation methods reminiscent of the Riesz-Thorin theorem. Specifically, we interpolate between the $\alpha=2$ case, which we know how to deal with using previously known bounds, and the $\alpha=1$ case, where a trivial bound is sufficient for our purposes.

The rest of this paper is structured as follows. In the next subsection, we give an overview of the state of the art and related results. Then, in Section~\ref{sec:prelims}, we present the notation we use as well as the necessary background on entropic quantities and norm interpolation, and give a definition of the class of randomizing families of channels to which our theorem applies. In Section~\ref{sec:main}, we prove the main result (i.e. Equation~\eqref{eqn:main-intro}), and Section~\ref{sec:entropy-accumulation} contains a sample application which shows what happens when we combine this result with the entropy accumulation theorem. Finally, Section~\ref{sec:discussion} concludes with a discussion of the result and of open questions.

\subsection{State of the art and related work}\label{sec:state-of-art}
In the purely classical case (i.e.~when $E$ is a classical random variable), an expression very similar to \eqref{eqn:main-intro} was derived by Hayashi~\cite[Equation (67)]{h10}, and tightness results are provided in~\cite{h10,hw16}. The results of~\cite{h10} were then extended to \emph{almost} 2-universal hash functions in~\cite{h16}. These results can then be used to obtain error exponents in the iid case.

In the CQ case (i.e.~classical $A$ but quantum $E$), the situation is more complicated. In~\cite{h12}, Hayashi gives two versions of \eqref{eqn:main-intro} that include additional terms involving either the number of distinct eigenvalues of an operator, or the ratio between the minimal and maximal eigenvalues of the same operator. In the iid case, this additional term is not problematic as it grows only polynomially in $n$ and is thus negligible as $n \rightarrow \infty$. In the general one-shot case, however, this term is much harder to deal with. In particular, such a result cannot be directly applied to a setting like DIQKD due to this issue. Another drawback of~\cite{h12} is that the bound obtained is in terms of the Petz-Rényi entropy rather than the sandwiched Rényi entropy, which is a worse bound in general.

In the fully quantum case, with both $A$ and $E$ quantum, Sharma~\cite{s15} obtains a result very similar to those of Hayashi in the CQ case, with an additional term involving the number of distinct eigenvalues of an operator. Sharma then goes on to apply this decoupling theorem to a panoply of quantum Shannon theory problems, obtaining error exponents for the iid case for all of them. However, the same caveat applies as in the CQ case: while the additional term poses no problem in the iid case, it is difficult to deal with in general.

\section{Preliminaries}\label{sec:prelims}

\subsection{Notation}\label{sec:notation}
In the table below, we summarize the notation used throughout the paper. 
\begin{center}
    \begin{tabular}{|c|l|}
        \hline
        \emph{Symbol} & \multicolumn{1}{c|}{\emph{Definition}}\\
        \hline
        $A, B, C, \dots$ & Quantum systems\\
        $\sfA,\sfB,\dots$ & Hilbert spaces corresponding to systems $A, B,\dots$\\
        $|A|$ & Dimension of $\sfA$\\
        $\linop(\sfA, \sfB)$ & Set of linear operators from $\sfA$ to $\sfB$\\
        $\linop(\sfA)$ & $\linop(\sfA, \sfA)$\\
        $X_{AB}$ & Operator in $\linop(\sfA \otimes \sfB)$\\
        $X_{A \rightarrow B}$ & Operator in $\linop(\sfA, \sfB)$\\
        $\Normal(\sfA)$ & Set of normal operators on $\sfA$\\
        $\Herm(\sfA)$ & Set of Hermitian operators on $\sfA$\\
        $\Pos(\sfA)$ & Set of positive semidefinite operators on $\sfA$\\
        $\D(\sfA)$ & Set of positive semidefinite operators on $\sfA$ with unit trace\\
        $\ident_A$ & Identity operator on $\sfA$\\
        $X^{\dagger}$ & Adjoint of $X$.\\
        $X_{A} \geqslant Y_A$ & $X-Y \in \Pos(\sfA)$.\\
        $\| X \|_p$ & Schatten $p$-norm of $X$: $\tr[(X^{\dagger} X)^{\frac{p}{2}}]^{\frac{1}{p}}$.\\
        $\Supp X$ & Support of $X$\\
        $H(A|B)_\rho$ & Conditional von Neumann entropy: $-\tr[\rho_{AB} \log \rho_{AB}] + \tr[\rho_B \log \rho_B]$\\
        \hline
    \end{tabular}
\end{center}
Note that all Hilbert spaces are finite-dimensional and endowed with a standard computational basis that we denote by $\{ \ket{a} : 1 \leqslant a \leqslant |A| \}$ for system $A$. Furthermore, by ``CQ operator'', we mean a bipartite operator $X_{AB}$ of the form:
\[ X_{AB} = \sum_{a} \proj{a} \otimes X_B(a) \]
where $X_B(a) \in \linop(\sfB)$ for all $a$.

\subsection{Entropic quantities}
Our main theorem involves the so-called \emph{sandwiched conditional quantum Rényi entropy} of a bipartite quantum state. This information measure was introduced relatively recently~\cite{mdsft13,wwy13} as an alternative way to generalize the classical Rényi entropy to the quantum case. It is defined as follows:

\begin{definition}[Sandwiched Rényi entropy]\label{def:sandwiched-renyi-entropy}
    Let $\rho_{AB} \in \D(\sfA \otimes \sfB)$ and $\sigma_B \in \D(\sfB)$, and let $\alpha \in [\frac{1}{2}, 1) \cup (1, \infty)$ be a real parameter. Then, the sandwiched Rényi entropy of order $\alpha$ of $A$ given $\sigma_B$ is defined as
    \begin{align*}
        H_\alpha(A|B)_{\rho|\sigma} := \begin{cases}%
            \frac{1}{1-\alpha} \log \tr\left[ \left( \sigma_B^{\frac{1-\alpha}{2\alpha}} \rho_{AB} \sigma_B^{\frac{1-\alpha}{2\alpha}} \right)^\alpha \right] & \text{if $\alpha < 1$ and $\tr[\rho\sigma] \neq 0$, or $\rho \in \sfA \otimes \Supp(\sigma)$}\\
            -\infty & \text{otherwise},
        \end{cases}
    \end{align*}
    and the sandwiched Rényi entropy of $A$ given $B$ is defined as: 
    \begin{align*}
        H_\alpha(A|B)_\rho := \max_{\omega_B \in \D(\sfB)} H_\alpha(A|B)_{\rho|\omega}.
    \end{align*}
\end{definition}
This quantity is monotone nonincreasing in $\alpha$ and satisfies the data processing inequality~\cite{fl13,b13}, and if we take the limit as $\alpha  \rightarrow  1$, we get the von Neumann entropy $H(A|B)_\rho$.  Furthermore, the limit of the Rényi entropy when $\alpha  \rightarrow  \infty$ makes sense and is called the min-entropy. It can be shown to be equal to
\begin{align*}
    \hmin(A|B)_\rho = H_\infty(A|B)_{\rho} = \sup \{ \lambda : \exists \sigma_B \in \D(\sfB), \rho_{AB} \leqslant 2^{-\lambda} \ident_A \otimes \sigma_B \}.
\end{align*}

Note also that the case $\alpha=2$ corresponds to the collision entropy, which is the quantity that shows up in the original proof of security of privacy amplification~\cite[][Chapter 5]{KMR03,renner-phd}, as well as in proofs of the ``decoupling theorem'' in the fully quantum case~\cite{FQSW,fred-these,dbwr10}. This is the main reason why the sandwiched Rényi entropy is relevant for us here: it offers a natural way to interpolate between the $\alpha=2$ case, which we know how to deal with, and the $\alpha=1$ case where a trivial bound suffices for our purposes.

While the Rényi 2-entropy is what naturally comes out of the above proofs, it has mostly been used as an auxiliary quantity, due to its sensitivity to small variations in the state $\rho_{AB}$ over which it is computed, and as a result we can rarely get good bounds on it in most settings of interest. To remedy this, we instead use the min-entropy (which is a lower bound on the 2-entropy by monotonicity in $\alpha$ and has additional desirable properties), and \emph{smooth} it, which means that we compute it on the best state in an $\varepsilon$-ball around $\rho$:
\[ \hmin^\varepsilon(A|B)_\rho := \max_{\theta_{AB} : D(\rho,\theta) \leqslant \varepsilon} \hmin(A|B)_\theta. \]

\subsection{Schatten $p$-norms and norm interpolation}\label{sec:norm-interpolation}
The main technical tool that we will need is a technique from complex analysis that allows us to interpolate between two norms. The relevant norms here are the Schatten $p$-norms:

\begin{definition}[Schatten $p$-norms]\label{def:schatten-p-norm}
    Let $N \in \linop(\sfA,\sfB)$ be a linear operator, and let $p \geqslant 1$. Then, the Schatten $p$-norm of $N$ is defined as
    \[ \| N \|_p := \tr\left[ (N^{\dagger}N)^{\frac{p}{2}} \right]^{\frac{1}{p}}. \]
\end{definition}

These norms satisfy the Hölder inequality: for $M \in \linop(\sfB,\sfC)$ and $N \in \linop(\sfA, \sfB)$ and $p>1$,
\[ \| MN \|_1 \leqslant \| M \|_p \| N \|_{\frac{p}{p-1}}. \]
It can also be shown that the following version holds for three operators $M, N, R$:
\begin{equation}\label{eqn:hoelder}
    \| MNR \|_1 \leqslant \| M \|_{p_1} \| N \|_{p_2} \| R \|_{p_3},
\end{equation}
where $1 = \frac{1}{p_1} + \frac{1}{p_2} + \frac{1}{p_{3}}$.
Furthermore, these norms can be expressed as the following optimization problem:
\begin{align*}
    \| M \|_p &= \sup_{Y \in \linop(\sfC, \sfB) : \| Y \|_{\frac{p}{p-1}} = 1} |\tr[YM]|\\
              &= \sup_{Y \in \linop(\sfC, \sfB) : \| Y \|_{\frac{p}{p-1}} = 1} |\tr[MY]|,
\end{align*}
and if $Q \in \Herm(\sfA)$, then,
\begin{align*}
    \| Q \|_p &= \sup_{Y \in \Herm(\sfA) : \| Y \|_{\frac{p}{p-1}} \leqslant 1} \tr[YQ].
\end{align*}
These norms are also unitarily invariant: for any unitaries $U_B$ and $V_C$,
\[ \| M \|_p = \| M U \|_p = \| V M \|_p. \]

The main reason for which these norms are relevant for us is that the sandwiched Rényi entropy can be expressed as:
\[ 2^{\frac{\alpha-1}{\alpha}H_\alpha(A|B)_{\rho|\sigma}} = \left\| \sigma_B^{\frac{1-\alpha}{2\alpha}} \rho_{AB} \sigma_B^{\frac{1-\alpha}{2\alpha}} \right\|_\alpha. \]

Our proof will be based on a Riesz-Thorin-like method to interpolate between the $2$-norm and the $1$-norm, in order to handle Rényi entropies with $\alpha \in (1,2]$. These methods have already been applied to quantum information theory before~\cite{pwpr06,db13}, and more specifically for proving properties of the sandwiched Rényi divergence~\cite{b13,d14,mt20}. Unfortunately, standard interpolation theorems such as Riesz-Thorin do not seem to apply directly to our case, and we will need to directly use the Hadamard three-line theorem, which is the main ingredient used to prove these standard results:

\begin{theorem}[Hadamard three-line theorem]\label{thm:hadamard-3lines}
    Let $S := \{ z \in \mbC : 0 \leqslant \Re(z) \leqslant 1 \}$, and let $f : S  \rightarrow  \mbC$ be a function that is holomorphic on the interior of $S$ and continuous on the border. Let $1 \leqslant p_0 \leqslant p_1$ and $0 < \theta < 1$, and define $p_0 \leqslant p_\theta \leqslant p_1$ via:
    \[ \frac{1}{p_\theta} = \frac{1-\theta}{p_{0}} + \frac{\theta}{p_1}.\]
Furthermore, for $k \in (0,1)$, let $M_k := \sup_{t \in \mbR} |f(k+it)|$. Then, for any $0 \leqslant \theta \leqslant 1$,
\[ |f(\theta)| \leqslant M_0^{1-\theta} M_1^{\theta}. \]
\end{theorem}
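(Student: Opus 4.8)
The plan is to prove this by the classical route: normalize the two boundary bounds, multiply $f$ by a Gaussian convergence factor so that it decays as $\lvert\Im(z)\rvert\to\infty$ across the strip, and then invoke the maximum modulus principle on a large rectangle. Two preliminary remarks are in order. First, the parameters $p_0,p_1,p_\theta$ do not enter the stated conclusion at all (they are recorded only because this is the shape in which the theorem gets fed into interpolation arguments), so the only thing to prove is $\lvert f(\theta)\rvert\leqslant M_0^{1-\theta}M_1^{\theta}$ for $\theta\in[0,1]$, where $M_0:=\sup_t\lvert f(it)\rvert$ and $M_1:=\sup_t\lvert f(1+it)\rvert$. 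Second, the argument uses that $f$ is bounded on $S$; this is automatic in every application we make of the theorem (there $f$ is a trace of a product of fixed finite-dimensional operators raised to imaginary powers, hence bounded in operator norm), and some such restriction is genuinely needed, as noted at the end.

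First I would reduce to the case $M_0=M_1=1$. Assuming for the moment $M_0,M_1>0$, the function $z\mapsto M_0^{1-z}M_1^{z}=\exp\!\big((1-z)\log M_0+z\log M_1\big)$ is entire, zero-free, and has modulus $M_0^{1-\Re z}M_1^{\Re z}$, which on $S$ is bounded above and below by positive constants. Hence $g(z):=f(z)\,M_0^{-(1-z)}M_1^{-z}$ is holomorphic on the interior of $S$, continuous and bounded on $S$, and satisfies $\lvert g(it)\rvert\leqslant 1$ and $\lvert g(1+it)\rvert\leqslant 1$ for all $t$; it now suffices to show $\lvert g(\theta)\rvert\leqslant 1$. (If $M_0=0$ or $M_1=0$, replace $(M_0,M_1)$ by $(M_0+\delta,M_1+\delta)$ throughout and let $\delta\to0$ at the very end.)

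Next, for $\varepsilon>0$ set $g_\varepsilon(z):=g(z)\,e^{\varepsilon(z^{2}-1)}$. With $z=x+iy$ one has $\Re(z^{2}-1)=x^{2}-y^{2}-1$, which is $\leqslant 0$ on both lines $x=0$ and $x=1$, and is $\leqslant -y^{2}$ throughout $0\leqslant x\leqslant 1$. The first fact gives $\lvert g_\varepsilon\rvert\leqslant 1$ on $\partial S$; the second gives $\lvert g_\varepsilon(z)\rvert\leqslant \lVert g\rVert_{\infty}\,e^{-\varepsilon y^{2}}\to 0$ as $\lvert y\rvert\to\infty$, uniformly in $x$. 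Choosing $Y$ large enough that $\lVert g\rVert_{\infty}\,e^{-\varepsilon Y^{2}}\leqslant 1$, the modulus of $g_\varepsilon$ is $\leqslant 1$ on the entire boundary of the rectangle $R_Y:=[0,1]\times[-Y,Y]$ (vertical sides by the boundary estimate, horizontal sides by the decay estimate), so the maximum modulus principle gives $\lvert g_\varepsilon\rvert\leqslant 1$ on $R_Y$; since every point of $S$ lies in some such $R_Y$, in fact $\lvert g_\varepsilon\rvert\leqslant 1$ on all of $S$. Letting $\varepsilon\to 0$ yields $\lvert g\rvert\leqslant 1$ on $S$, and unwinding the normalization gives $\lvert f(\theta)\rvert\leqslant M_0^{1-\theta}M_1^{\theta}$ (then $\delta\to 0$ to remove the positivity assumption if needed). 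In fact this shows the slightly stronger $\lvert f(\theta+it)\rvert\leqslant M_0^{1-\theta}M_1^{\theta}$ for all $t$.

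The one point that requires care is precisely the boundedness hypothesis: without it — or some Phragmén--Lindelöf-type growth bound — the theorem is false, as witnessed by $f(z)=\exp(-i\,e^{i\pi z})$, which has $M_0=M_1=1$ yet $\lvert f(\tfrac12)\rvert=e$ (and is unbounded on $S$). So the ``main obstacle'' is really just to be explicit that we are in a setting where $f$ is bounded; granting that, everything above is a routine application of the maximum principle.
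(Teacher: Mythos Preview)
The paper does not supply its own proof of this theorem: immediately after the statement it simply says ``A proof can be found in~[rs75, page~33].'' So there is no in-paper argument to compare against. Your write-up is the standard three-line proof (normalize the boundary suprema to $1$, multiply by the Gaussian weight $e^{\varepsilon(z^2-1)}$, apply the maximum modulus principle on tall rectangles, and let $\varepsilon\to 0$), and it is correct.

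Two small comments. First, you are right that the $p_0,p_1,p_\theta$ in the displayed statement play no role in the conclusion; they are recorded only because that is the packaging used downstream in Lemma~\ref{lem:3lines}. Second, you correctly flag that a boundedness (or Phragm\'en--Lindel\"of type growth) hypothesis on $f$ is needed and is not explicitly included in the paper's statement; your observation that in the paper's only application $f$ is a trace of products of fixed finite-dimensional operators raised to complex powers, hence uniformly bounded on the strip, is exactly the right justification, and your counterexample $f(z)=\exp(-i\,e^{i\pi z})$ shows the hypothesis cannot be dropped in general.
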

A proof can be found in~\cite[][page 33]{rs75}.

\subsection{Randomizing channels and hash functions}\label{sec:hashing}
Our basic setting is the following: we have a bipartite quantum state $\rho_{AE}$ where $E$ is a quantum system held by an adversary, and we want to apply a randomizing procedure to $A$ to get a $C$ such that the resulting state is close to uniform on $C$ and uncorrelated with $E$. To do this, we have a family of CPTP maps $\{ \mathcal{R}^h_{A \rightarrow C} : h \in \mathcal{H} \}$ together with a probability distribution $p$ on $\mathcal{H}$; we choose $h$ randomly according to $p$, publicly announce it (so that an adversary has access to it), and then apply the function to the system we want to randomize. A particular case of interest is when $A$ and $C$ are classical systems, in which case $\{ \mathcal{R}^h : h \in \mathcal{H} \}$ is a family of hash functions. In this case, each $h$ is a function from the set $\mathcal{A}$ to $\mathcal{C}$, and we define $\mathcal{R}^h_{A  \rightarrow  C}$ as:
\[ \mathcal{R}^h_{A  \rightarrow  C}(\theta_A) = \sum_{a \in \mathcal{A}} \bra{a} \theta_A \ket{a} \cdot \proj{h(a)}_C, \]
where the Hilbert spaces $\sfA$ and $\sfC$ have computational bases that are indexed by the sets $\mathcal{A}$ and $\mathcal{C}$ respectively.

We represent the setup in the following manner. Let $\mathcal{U}_{A \rightarrow C}$ be the perfectly randomizing channel
\[ \mathcal{U}(\theta_A) = \frac{\ident_C}{|\mathcal{C}|} \cdot \tr[\theta]. \]
Our goal is then to ensure that, for any state $\rho_{AE} \in \D(\sfA \otimes \sfE)$,
\[ \mbE_{h \sim p} \left\| (\mathcal{R}^h - \mathcal{U})(\rho_{AE}) \right\|_1 \leqslant \varepsilon, \]
for some suitably small $\varepsilon$. (Of course, in the CQ case, this condition only needs to be fulfilled for a CQ state $\rho_{AE}$.)

Our proof will be able to deal with families of channels that satisfy the following definition:

\begin{definition}[Randomizing family of channels]\label{def:rand-family}
    A family of channels $\{ \mathcal{R}^h_{A \rightarrow C} : h \in \mathcal{H} \}$ with distribution $p$ is \emph{$\lambda$-randomizing} if, for any $\rho_{AE} \in \linop(\sfA \otimes \sfE)$,
    \[ \mbE_{h \sim p} \| (\mathcal{R}^h - \mathcal{U})(\rho_{AE}) \|_2 \leqslant \lambda \| \rho_{AE} \|_2. \]
    In the CQ-case, a randomizing family of hash functions is $\lambda$-randomizing if the above condition holds for any CQ $\rho_{AE}$.
\end{definition}

Several families of channels that satisfy this definition exist, both fully quantum and CQ. For example, on the fully quantum side, quantum expanders~\cite{bst10,h07} are randomizing families of channels with an additional regularity condition. In addition, choosing a random unitary operator from a unitary 2-design and tracing out a subsystem is also a $1$-randomizing map~\cite{FQSW} (see also~\cite{fred-these,dbwr10}). On the CQ-side, the best known example consists of 2-universal families of hash functions, which were studied in the context of privacy amplification:
\begin{definition}[2-universal family of hash functions]\label{def:2-univ}
    A family of hash functions $\mathcal{H} \subseteq \mathcal{A}  \rightarrow  \mathcal{C}$ with distribution $p$ is called \emph{2-universal} if, for any $a, a' \in \mathcal{A}$ with $a \neq a'$,
    \[ \Pr_{h \sim p}\left[ h(a) = h(a') \right] = \frac{1}{|\mathcal{C}|}. \]
\end{definition}

We can show that any such family is 1-randomizing. This was proven in~\cite{KMR03,renner-phd} as part of the security proof of privacy amplification against quantum adversaries; we reproduce it here for convenience:
\begin{lemma}\label{lem:2-univ-is-2-norm-rand}
Any 2-universal family of hash functions is 1-randomizing.
\end{lemma}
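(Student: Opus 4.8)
The plan is to verify the defining inequality of a $1$-randomizing family directly, by computing $\mbE_{h\sim p}\|(\mcal{R}^h-\mcal{U})(\rho_{AE})\|_2^2$ and showing it is at most $\|\rho_{AE}\|_2^2$; since the square function is monotone on nonnegative reals and $\mbE[\|X_h\|_2]\leqslant (\mbE[\|X_h\|_2^2])^{1/2}$ by Jensen/Cauchy–Schwarz, this suffices. It is convenient to work with a CQ state $\rho_{AE}=\sum_a \proj{a}_A\otimes\rho_E(a)$ and to expand the squared $2$-norm as a trace, $\|X\|_2^2=\tr[X^{\dagger}X]$. First I would write out $\mcal{R}^h(\rho_{AE})=\sum_a \proj{h(a)}_C\otimes\rho_E(a)$ and $\mcal{U}(\rho_{AE})=\frac{\ident_C}{|\mcal{C}|}\otimes\rho_E$, subtract, and expand $\tr[((\mcal{R}^h-\mcal{U})(\rho_{AE}))^2]$ into four terms. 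Three of the four terms combine to cancel the ``diagonal'' contribution: the cross terms $\tr[\mcal{R}^h(\rho)\mcal{U}(\rho)]$ and $\tr[\mcal{U}(\rho)\mcal{R}^h(\rho)]$ each equal $\frac{1}{|\mcal{C}|}\sum_a\tr[\rho_E(a)\rho_E]$ independent of $h$, and $\tr[\mcal{U}(\rho)^2]=\frac{1}{|\mcal{C}|}\tr[\rho_E^2]$, so these already do not depend on the random choice of $h$.

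The key step is taking the expectation of $\tr[\mcal{R}^h(\rho)^2]=\sum_{a,a'}\delta_{h(a),h(a')}\tr[\rho_E(a)\rho_E(a')]$ over $h\sim p$. I would split the sum into the diagonal part $a=a'$, which contributes $\sum_a\tr[\rho_E(a)^2]$ with probability one, and the off-diagonal part $a\neq a'$. For the off-diagonal part, strong $2$-universality (Definition~\ref{def:2-univ}) gives $\Pr_{h\sim p}[h(a)=h(a')]=\sum_{c}\Pr[h(a)=c\wedge h(a')=c]=|\mcal{C}|\cdot\frac{1}{|\mcal{C}|^2}=\frac{1}{|\mcal{C}|}$, so in expectation the off-diagonal part contributes $\frac{1}{|\mcal{C}|}\sum_{a\neq a'}\tr[\rho_E(a)\rho_E(a')]=\frac{1}{|\mcal{C}|}\bigl(\tr[\rho_E^2]-\sum_a\tr[\rho_E(a)^2]\bigr)$, using $\rho_E=\sum_a\rho_E(a)$. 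Adding up all four expanded terms, the contributions proportional to $\frac{1}{|\mcal{C}|}$ cancel exactly, and one is left with $\mbE_h\|(\mcal{R}^h-\mcal{U})(\rho_{AE})\|_2^2=\sum_a\tr[\rho_E(a)^2]=\tr[\rho_{AE}^2]=\|\rho_{AE}\|_2^2$, which is in fact an equality. One small point to handle carefully is positivity: $\tr[\rho_E(a)\rho_E(a')]$ need not be nonnegative term by term, but the manipulations above are just linear-algebraic identities, so no positivity is needed until the final Cauchy–Schwarz step, where only $\mbE_h\|(\mcal{R}^h-\mcal{U})(\rho_{AE})\|_2^2\geqslant 0$ is used.

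The main obstacle, such as it is, is purely bookkeeping: making sure the four expanded trace terms are paired correctly and that the $h$-dependence genuinely drops out before taking the expectation, rather than relying on the expectation to do all the work. There is also a minor subtlety in that the lemma as stated allows $\rho_{AE}$ to be an arbitrary operator in the CQ case; since everything reduces to the bilinear identity above, Hermiticity or positivity of $\rho_{AE}$ is not actually needed for the computation, though one may note that the interesting case is $\rho_{AE}\in\D(\sfA\otimes\sfE)$.
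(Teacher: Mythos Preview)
Your approach is essentially the same as the paper's: write out the CQ operator, apply Jensen to pass from $\mbE_h\|\cdot\|_2$ to $(\mbE_h\|\cdot\|_2^2)^{1/2}$, expand the squared $2$-norm, and use strong $2$-universality to compute $\Pr_h[h(a)=h(a')]=1/|\mathcal{C}|$ for $a\neq a'$. Two small corrections: first, the $1/|\mathcal{C}|$ contributions do not cancel \emph{exactly}; carrying your own bookkeeping through gives
\[
\mbE_h\|(\mathcal{R}^h-\mathcal{U})(\rho_{AE})\|_2^2=\Bigl(1-\tfrac{1}{|\mathcal{C}|}\Bigr)\sum_a\tr[\rho_E(a)^2]=\Bigl(1-\tfrac{1}{|\mathcal{C}|}\Bigr)\|\rho_{AE}\|_2^2,
\]
which is $\leqslant\|\rho_{AE}\|_2^2$, so your conclusion survives (and is in fact slightly sharper than the paper's bound, which uses the cruder step $\sum_{a\neq a'}\leqslant\sum_{a,a'}$). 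Second, since Definition~\ref{def:rand-family} asks for the bound over all CQ $\rho_{AE}\in\linop(\sfA\otimes\sfE)$, you should write $\tr[X X^\dagger]$ rather than $\tr[X^2]$; the paper does this, and the same computation goes through verbatim with $\rho_E(a')$ replaced by $\rho_E(a')^\dagger$.
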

\begin{proof}
    Let $\rho_{AE} \in \linop(\sfA \otimes \sfE)$ be a CQ operator, and write:
    \[ \rho_{AE} = \sum_a \proj{a}_A \otimes \rho_E(a). \]
    Then, we have:
    \begin{multline}
\mbE_h \| (\mathcal{R}^h - \mathcal{U})(\rho_{AE}) \|_2\\
        \begin{aligned}
         &= \mbE_h \tr\left[ (\mathcal{R}^h - \mathcal{U})(\rho) (\mathcal{R}^h - \mathcal{U})(\rho)^{\dagger} \right]^{\frac{1}{2}}\\
         &\leqslant \left( \mbE_h \tr\left[ (\mathcal{R}^h - \mathcal{U})(\rho) (\mathcal{R}^h - \mathcal{U})(\rho)^{\dagger} \right] \right)^{\frac{1}{2}}\\
         &= \left( \mbE_h \tr\left[ \mathcal{R}^h(\rho) \mathcal{R}^h(\rho)^{\dagger} - \left(\frac{\ident_C}{|C|} \otimes \rho_E\right) \mathcal{R}^h(\rho)^{\dagger} - \mathcal{R}^h(\rho)\left(\frac{\ident_C}{|C|} \otimes \rho_E\right)^{\dagger} + \frac{\ident_C}{|C|^2} \otimes \rho_E \rho_E^{\dagger}\right] \right)^{\frac{1}{2}}\\
         &= \left( \mbE_h \tr\left[ \mathcal{R}^h(\rho) \mathcal{R}^h(\rho)^{\dagger} \right] - \frac{1}{|C|} \tr[\rho_E \rho_E^{\dagger}] \right)^{\frac{1}{2}}, \label{eqn:2norm-bound}
    \end{aligned}
    \end{multline}
    Concentrating on the first term, we have:
    \begin{align*}
        \mbE_h  \tr\left[ \mathcal{R}^h(\rho) \mathcal{R}^h(\rho)^{\dagger} \right] &= \mbE_h \sum_{c \in \mathcal{C}} \sum_{a,a' \in \mathcal{A}, c=h(a)=h(a')} \tr[\rho_E(a)\rho_E(a')^{\dagger}]\\
                                                                              &= \mbE_h \sum_{a,a' \in \mathcal{A}, h(a)=h(a')} \tr[\rho_E(a)\rho_E(a')^{\dagger}]\\
                                                                              &= \sum_{a,a' \in \mathcal{A}} \Pr_h[h(a)=h(a')] \tr[\rho_E(a)\rho_E(a')^{\dagger}]\\
                                                                              &= \sum_a \tr[\rho_E(a) \rho_E(a)^{\dagger}] + \sum_{a \neq a'} \frac{1}{|C|} \tr[\rho_E(a)\rho_E(a')^{\dagger}]\\
                                                                              &\leqslant \|\rho_{AE} \|_2^2 + \frac{1}{|C|} \sum_{a,a'} \tr[\rho_E(a)\rho_E(a')^{\dagger}]\\
                                                                              &= \|\rho_{AE} \|_2^2 + \frac{1}{|C|} \tr[\rho_E\rho_E^{\dagger}].
    \end{align*}
    Substituting this into \eqref{eqn:2norm-bound} yields the lemma.
\end{proof}

\section{Main result}\label{sec:main}

As stated in the introduction, our main technical tool will be norm interpolation: we need a way to interpolate between the cases $\alpha=1$ and $\alpha=2$. Unfortunately, standard interpolation results cannot be immediately applied due to the expectation over $h$, and we need the following custom result that we prove directly from the three-line theorem:
\begin{lemma}\label{lem:3lines}
    Let $h$ be a random variable taking values in a set $\mathcal{H}$, and for each $h \in \mathcal{H}$, let $\mathcal{N}^h : \linop(\sfA)  \rightarrow  \linop(\sfB)$ be any superoperator, and let $\alpha \in (1,2]$. Then, for any $\rho_{AE} \in \D(\sfA \otimes \sfE)$ and any $\sigma_E \in \D(\sfE)$ such that $\left\| \sigma_E^{\frac{1-\alpha}{2\alpha}} \rho_{AE} \sigma_E^{\frac{1-\alpha}{2\alpha}} \right\|_\alpha = 1$, we have:
    \begin{multline*}
        \mbE_h \left\| \mathcal{N}^h\left(\sigma_E^{\frac{1-\alpha}{2\alpha}} \rho_{AE} \sigma_{E}^{\frac{1-\alpha}{2\alpha}} \right) \right\|_\alpha\\
        \leqslant \left( \max_{\eta \in \Normal(\sfA \otimes \sfE) : \| \eta_{AE} \|_1 = 1} \mbE_h \| \mathcal{N}^h(\eta_{AE}) \|_1 \right)^{\frac{2}{\alpha}-1} \left( \max_{\eta \in \Normal(\sfA \otimes \sfE) : \|\eta\|_2 = 1} \mbE_h \| \mathcal{N}^h(\eta_{AE}) \|_2 \right)^{2\left( \frac{\alpha-1}{\alpha} \right)}.
    \end{multline*}
    Furthermore, if $\rho_{AE}$ is a CQ state, then the maximizations in the two terms on the right-hand side can be restricted to CQ operators.
\end{lemma}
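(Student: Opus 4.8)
The plan is to deduce this as a direct consequence of the Hadamard three-line theorem (Theorem~\ref{thm:hadamard-3lines}) applied with $p_0=1$ and $p_1=2$. The value of $\theta$ for which $p_\theta=\alpha$ is $\theta=\tfrac{2(\alpha-1)}{\alpha}$, so $1-\theta=\tfrac{2}{\alpha}-1$; these are exactly the two exponents on the right-hand side. The reason one cannot simply quote Riesz--Thorin is that the expectation over $h$ appears at \emph{both} endpoints (in the form $\mbE_h\|\cdot\|_1$ and $\mbE_h\|\cdot\|_2$), so the object being interpolated is really a mixed $L^1(\mathcal H)$--Schatten norm in which only the Schatten index moves, the $h$-expectation being carried along unchanged. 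Abbreviate $X:=\sigma_E^{\frac{1-\alpha}{2\alpha}}\rho_{AE}\sigma_E^{\frac{1-\alpha}{2\alpha}}\in\Pos(\sfA\otimes\sfE)$, which satisfies $\|X\|_\alpha=1$ by hypothesis; set $\alpha':=\tfrac{\alpha}{\alpha-1}$ and write $\Lambda_1,\Lambda_2$ for the first and second maxima on the right-hand side. By Schatten duality, for each $h$ choose $Y^h$ with $\|Y^h\|_{\alpha'}\le 1$ and $\tr\!\big[Y^h\,\mathcal N^h(X)\big]=\|\mathcal N^h(X)\|_\alpha$ (take $Y^h=0$ when $\mathcal N^h(X)=0$); since this optimization sits inside $\mbE_h$, the choice is legitimate and $\mbE_h\tr[Y^h\mathcal N^h(X)]=\mbE_h\|\mathcal N^h(X)\|_\alpha$.

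The next step is to complexify both arguments. Write $X=\sum_j x_jP_j$ with $x_j>0$ for the spectral decomposition of $X$ and $Y^h=U^h|Y^h|$, $|Y^h|=\sum_k y^h_kQ^h_k$, for the polar decomposition of $Y^h$, and define for $z$ in the strip $S$ the normal operators
\[
X_z:=X^{\alpha(1-z/2)}=\sum_j x_j^{\,\alpha(1-z/2)}P_j,\qquad
Y^h_z:=U^h\,|Y^h|^{\frac{\alpha'}{2}z}=U^h\sum_k (y^h_k)^{\frac{\alpha'}{2}z}\,Q^h_k ,
\]
together with $F(z):=\mbE_h\tr\!\big[Y^h_z\,\mathcal N^h(X_z)\big]$, where $\mathcal N^h$ is understood as $\mathcal N^h\otimes\mathrm{id}_E$. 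Three points then have to be verified. First, each summand of $F$ is a finite sum of exponentials in $z$, hence entire, so $F$ is holomorphic on the interior of $S$ and continuous and bounded on $S$ (for infinite $\mathcal H$ one truncates and passes to the limit, using the uniform bound $\|Y^h_z\|_\infty\le 1$ noted below). Second, the exponents were designed so that $\alpha(1-\theta/2)=1$ and $\tfrac{\alpha'}{2}\theta=1$ — both equivalent to $\theta=\tfrac{2(\alpha-1)}{\alpha}$ — so that $X_\theta=X$, $Y^h_\theta=Y^h$ and $F(\theta)=\mbE_h\|\mathcal N^h(X)\|_\alpha\ge 0$. Third, on $\Re z=0$ the real part of the $X$-exponent is $\alpha$, giving $\|X_{it}\|_1=\sum_j x_j^\alpha=\|X\|_\alpha^\alpha=1$, while the real part of the $Y$-exponent is $0$, giving $\|Y^h_{it}\|_\infty\le 1$; hence, by the Hölder inequality $\|AB\|_1\le\|A\|_\infty\|B\|_1$, by bounding $\|Y^h_{it}\|_\infty\le1$ uniformly inside the expectation, and by the hypothesis applied to the normal unit-trace-norm operator $X_{it}$, we get $|F(it)|\le\mbE_h\|\mathcal N^h(X_{it})\|_1\le\Lambda_1$. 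Symmetrically, on $\Re z=1$ the real parts of the $X$- and $Y$-exponents are $\alpha/2$ and $\alpha'/2$, giving $\|X_{1+it}\|_2^2=\sum_j x_j^\alpha=1$ and $\|Y^h_{1+it}\|_2^2=\sum_k (y^h_k)^{\alpha'}\le 1$, whence by the Hölder inequality $\|AB\|_1\le\|A\|_2\|B\|_2$, $|F(1+it)|\le\mbE_h\|\mathcal N^h(X_{1+it})\|_2\le\Lambda_2$. Theorem~\ref{thm:hadamard-3lines} then gives $\mbE_h\|\mathcal N^h(X)\|_\alpha=F(\theta)\le\Lambda_1^{1-\theta}\Lambda_2^{\theta}=\Lambda_1^{\frac{2}{\alpha}-1}\Lambda_2^{\frac{2(\alpha-1)}{\alpha}}$, which is the assertion.

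For the CQ refinement: if $\rho_{AE}$ is CQ then so is $X$, hence so are its spectral projections $P_j$ and all the $X_z$; and since $\mathcal N^h(X)$ is then CQ, the dual optimizer $Y^h$ may be chosen CQ (only the CQ part of $Y$ contributes to $\tr[Y\,\mathcal N^h(X)]$), so that $U^h$, $|Y^h|$ and every $Y^h_z$ are CQ as well. Thus each operator fed to $\mathcal N^h$ in the boundary estimates is a CQ normal operator, and the two maxima may be restricted to CQ operators.

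The step I expect to need the most care is the choice of the two exponent functions $z\mapsto\alpha(1-z/2)$ and $z\mapsto\tfrac{\alpha'}{2}z$ so that they \emph{simultaneously} take the value $1$ at $z=\theta$ and have real parts on the two boundary lines of $S$ producing exactly the normalizations $\|X_{it}\|_1=1$, $\|Y^h_{it}\|_\infty\le1$ and $\|X_{1+it}\|_2=1$, $\|Y^h_{1+it}\|_2\le1$; the two identities $\alpha(1-\theta/2)=1$ and $\tfrac{\alpha'}{2}\theta=1$ that make this possible are precisely what pin down $p_0=1$, $p_1=2$ and $\theta=\tfrac{2(\alpha-1)}{\alpha}$. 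The remaining, more routine, point is the holomorphy and boundedness of $F$ for infinite $\mathcal H$, which follows from the uniform bound $\|Y^h_z\|_\infty\le 1$ on the whole strip.
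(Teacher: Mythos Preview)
Your proof is correct and follows essentially the same route as the paper: both build a holomorphic function on the strip by complexifying the exponents on the sandwiched operator and on the dual witnesses $Y^h$, then invoke the three-line theorem with $p_0=1$, $p_1=2$, $\theta=\tfrac{2(\alpha-1)}{\alpha}$. Your exponent $\alpha(1-z/2)$ is algebraically identical to the paper's $(1-z)\alpha+z\alpha/2$, and your use of the polar decomposition $Y^h=U^h|Y^h|$ (rather than choosing a Hermitian dual optimizer and writing $Y_h^{z\alpha'/2}$ directly) is arguably a bit cleaner, since it sidesteps any branch-of-logarithm issues for complex powers of operators with negative eigenvalues.
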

\begin{proof}
    For any $h \in \mathcal{H}$, let $Y_h \in \Herm(\sfB \otimes \sfE)$ such that $\| Y_h \|_{\frac{\alpha}{\alpha-1}} = 1$ and 
    \[ \tr\left[Y_h \mathcal{N}^h\left(\sigma_E^{\frac{1-\alpha}{2\alpha}} \rho_{AE} \sigma_{E}^{\frac{1-\alpha}{2\alpha}} \right)\right] = \left\|\mathcal{N}^h\left(\sigma_E^{\frac{1-\alpha}{2\alpha}} \rho_{AE} \sigma_{E}^{\frac{1-\alpha}{2\alpha}} \right) \right\|_\alpha. \]
        Now, define $f : S  \rightarrow  \mbC$ as:
    \begin{align*}
        f(z) := \mbE_h \tr\left[ Y_h^{z\left( \frac{\alpha}{2(\alpha-1)} \right)} \mathcal{N}^h\left(  \left( \sigma_E^{\frac{1-\alpha}{2\alpha}} \rho_{AE} \sigma_E^{\frac{1-\alpha}{2\alpha}} \right)^{(1-z)\alpha + z(\alpha / 2)} \right) \right].
    \end{align*}
    This function fulfills the conditions of the three-line theorem (Theorem~\ref{thm:hadamard-3lines}), which we now apply with the choices
\begin{align*}
    p_0 &= 1\\
    p_1 &= 2\\
    \theta &= 2\left( \frac{\alpha-1}{\alpha} \right) \in (0,1].
\end{align*}
Note that these choices imply that $p_\theta = \alpha$ and that $(1-\theta)\alpha + \theta(\alpha / 2) = 1$. We therefore have that
\begin{align}
    |f(\theta)| &= \left| \mbE_h \tr\left[ Y_h \mathcal{N}^h\left( \sigma_E^{\frac{1-\alpha}{2\alpha}} \rho_{AE} \sigma_{E}^{\frac{1-\alpha}{2\alpha}} \right) \right] \right|\\
           &= \mbE_h \left\| \mathcal{N}^h\left(\sigma_E^{\frac{1-\alpha}{2\alpha}} \rho_{AE} \sigma_{E}^{\frac{1-\alpha}{2\alpha}} \right) \right\|_\alpha\\
           &\leqslant M_0^{\frac{2}{\alpha} - 1} M_1^{2\left( \frac{\alpha-1}{\alpha} \right)},   \label{eqn:bound_with_Ms}
\end{align}
where $M_0$ and $M_1$ are defined as in Theorem~\ref{thm:hadamard-3lines}. We now bound these two quantities, starting with $M_0$:
\begin{align}
    M_0 &= \sup_{t \in \mbR} |f(it)|\\
        &= \sup_{t \in \mbR} \left| \mbE_h \tr\left[ Y_h^{it\left( \frac{\alpha}{2(\alpha-1)} \right)} \mathcal{N}^h\left(  \left( \sigma_E^{\frac{1-\alpha}{2\alpha}} \rho_{AE} \sigma_E^{\frac{1-\alpha}{2\alpha}} \right)^{(1-it)\alpha + it(\alpha / 2)} \right) \right] \right|\\
\intertext{Now, notice that $\left\| Y_h^{it\left( \frac{\alpha}{2(\alpha-1)} \right)} \right\|_{\infty} = 1$ since $Y_h$ is Hermitian, and hence we can continue as:}
    &\leqslant \sup_{t \in \mbR} \mbE_h \left\| \mathcal{N}^h\left(  \left( \sigma_E^{\frac{1-\alpha}{2\alpha}} \rho_{AE} \sigma_E^{\frac{1-\alpha}{2\alpha}} \right)^{(1-it)\alpha + it(\alpha / 2)} \right) \right\|_1.
\end{align}
Note now that
\begin{align*}
    \left\| \left( \sigma_E^{\frac{1-\alpha}{2\alpha}} \rho_{AE} \sigma_E^{\frac{1-\alpha}{2\alpha}} \right)^{(1-it)\alpha + it(\alpha / 2)} \right\|_1  &= \left\| \left( \sigma_E^{\frac{1-\alpha}{2\alpha}} \rho_{AE} \sigma_E^{\frac{1-\alpha}{2\alpha}} \right)^{\alpha} \right\|_1\\
                                                                                                              &= \left\| \sigma_E^{\frac{1-\alpha}{2\alpha}} \rho_{AE} \sigma_E^{\frac{1-\alpha}{2\alpha}} \right\|_\alpha^{\alpha}\\
                                                                                                              &= 1.
\end{align*}
This means that we can replace the supremum over $\mbR$ by a maximization over all normal operators with an $1$-norm of 1. This yields the bound we want:
\[ M_0 \leqslant \max_{\eta \in \Normal(\sfA \otimes \sfE) : \| \eta \|_1 = 1} \mbE_h \left\| \mathcal{N}^h(\eta_{AE}) \right\|_1. \]
Note also that if $\rho_{AE}$ is CQ, then so is $\left( \sigma_E^{\frac{1-\alpha}{2\alpha}} \rho_{AE} \sigma_E^{\frac{1-\alpha}{2\alpha}} \right)^{(1-it)\alpha + it(\alpha / 2)}$ and therefore the maximization can be restricted to CQ operators.

The bound on $M_1$ is proven in a very similar manner:
\begin{align*}
    M_1 &= \sup_{t \in \mbR} |f(1+it)|\\
        &= \sup_{t \in \mbR} \left| \mbE_h \tr\left[ Y_h^{(1+it)\left( \frac{\alpha}{2(\alpha-1)} \right)} \mathcal{N}^h\left(  \left( \sigma_E^{\frac{1-\alpha}{2\alpha}} \rho_{AE} \sigma_E^{\frac{1-\alpha}{2\alpha}} \right)^{-it\alpha + (1+it)(\alpha / 2)} \right) \right]\right|.
\end{align*}
        Now, notice that 
\begin{align*}
    \left\| Y_h^{(1+it)\left( \frac{\alpha}{2(\alpha-1)} \right)} \right\|_2 &= \left\| Y_h^{\frac{\alpha}{2(\alpha-1)}} \right\|_2\\
                                                                 &= \tr\left[ Y_h^{\frac{\alpha}{\alpha-1}} \right]^{\frac{1}{2}}\\
                                                                 &= \| Y_h \|_{\frac{\alpha}{\alpha-1}}^{\frac{\alpha}{2(\alpha-1)}}\\
                                                                 &= 1.
\end{align*}
Hence,
\begin{align*}
    M_1 &\leqslant \sup_{t \in \mbR} \mbE_h \left\| \mathcal{N}^h\left(  \left( \sigma_E^{\frac{1-\alpha}{2\alpha}} \rho_{AE} \sigma_E^{\frac{1-\alpha}{2\alpha}} \right)^{-it\alpha + (1+it)(\alpha / 2)} \right) \right\|_2.
\end{align*}
However,
\begin{align*}
    \left\| \left( \sigma_E^{\frac{1-\alpha}{2\alpha}} \rho_{AE} \sigma_E^{\frac{1-\alpha}{2\alpha}} \right)^{-it\alpha + (1+it)(\alpha / 2)} \right\|_2 &= \left\| \left( \sigma_E^{\frac{1-\alpha}{2\alpha}} \rho_{AE} \sigma_E^{\frac{1-\alpha}{2\alpha}} \right)^{\alpha / 2} \right\|_2\\
                                                                                                              &= \tr\left[ \left( \sigma_E^{\frac{1-\alpha}{2\alpha}} \rho_{AE} \sigma_E^{\frac{1-\alpha}{2\alpha}} \right)^\alpha \right]^{\frac{1}{2}}\\
                                                                                                              &= 1.
\end{align*}
We can therefore replace the supremum over $t$ by a maximization over all normal operators with 2-norm of at most one, to get our final bound on $M_1$:
\[ M_1 \leqslant \max_{\eta \in \Normal(\sfA \otimes \sfE) : \| \eta \|_2 = 1} \mbE_h \left\| \mathcal{N}^h(\eta_{AE}) \right\|_2. \]
Once again, if $\rho_{AE}$ is CQ, we can restrict the maximization to CQ operators. Substituting the two bounds we derived on $M_0$ and $M_1$ into \eqref{eqn:bound_with_Ms} finishes the proof.
\end{proof}

We are now ready to state and prove the main theorem:
\begin{theorem}[Main theorem]\label{thm:main}
    Let $\{ \mathcal{R}^h_{A  \rightarrow  C} : h \in \mathcal{H} \}$ be a $\lambda$-randomizing family of channels, and let $\rho_{AE} \in \D(\sfA \otimes \sfE)$ and $\sigma_E \in \D(\sfE)$ such that $\rho_{AE}$ is supported on $\sfA \otimes \Supp(\sigma_E)$. Then,
    \begin{align*}
        \mbE_h \left\| (\mathcal{R}^h - \mathcal{U})(\rho_{AE}) \right\|_1 \leqslant 2^{\frac{2}{\alpha}-1} \cdot 2^{\frac{\alpha-1}{\alpha}(\log|C| - H_\alpha(A|E)_{\rho|\sigma} + 2\log \lambda)}.
    \end{align*}
    Furthermore, if $\rho_{AE}$ is CQ and $\{ \mathcal{R}^h : h \in \mathcal{H}\}$ is a $\lambda$-randomizing family of hash functions, the same bound holds.
\end{theorem}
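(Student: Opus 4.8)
The strategy is to apply Lemma~\ref{lem:3lines} with $\mathcal{N}^h := \mathcal{R}^h - \mathcal{U}$, after reducing the trace norm appearing in the theorem to the $\alpha$-norm appearing in the lemma by a Hölder step, and then to bound the two maximizations on the right-hand side of the lemma. Set $\tilde\rho_{AE} := \sigma_E^{\frac{1-\alpha}{2\alpha}}\rho_{AE}\sigma_E^{\frac{1-\alpha}{2\alpha}}$; the support hypothesis guarantees that these (possibly negative) powers of $\sigma_E$ are well defined on $\Supp(\sigma_E)$ and that $H_\alpha(A|E)_{\rho|\sigma}$ is finite, and by the defining identity of the sandwiched Rényi entropy $\|\tilde\rho_{AE}\|_\alpha = 2^{\frac{\alpha-1}{\alpha}\left(-H_\alpha(A|E)_{\rho|\sigma}\right)}$. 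Since $\mathcal{R}^h$ and $\mathcal{U}$ act only on $A$, they commute with conjugation by $\sigma_E$ on $E$, so
\[ (\mathcal{R}^h - \mathcal{U})(\rho_{AE}) = \bigl(\ident_C\otimes\sigma_E^{\frac{\alpha-1}{2\alpha}}\bigr)\,(\mathcal{R}^h - \mathcal{U})(\tilde\rho_{AE})\,\bigl(\ident_C\otimes\sigma_E^{\frac{\alpha-1}{2\alpha}}\bigr), \]
and the three-operator Hölder inequality~\eqref{eqn:hoelder} with $p_1=p_3=\tfrac{2\alpha}{\alpha-1}$ and $p_2 = \alpha$ gives, using $\bigl\|\ident_C\otimes\sigma_E^{\frac{\alpha-1}{2\alpha}}\bigr\|_{2\alpha/(\alpha-1)} = |C|^{\frac{\alpha-1}{2\alpha}}$ (as $\tr[\sigma_E]=1$),
\[ \mbE_h\bigl\|(\mathcal{R}^h - \mathcal{U})(\rho_{AE})\bigr\|_1 \leqslant |C|^{\frac{\alpha-1}{\alpha}}\;\mbE_h\bigl\|(\mathcal{R}^h - \mathcal{U})(\tilde\rho_{AE})\bigr\|_\alpha. \]

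Next I apply Lemma~\ref{lem:3lines} to $\rho_{AE}/\|\tilde\rho_{AE}\|_\alpha$ (whose sandwiched $\alpha$-norm is then $1$; the lemma's proof uses only positivity of the input and this normalization, not unit trace, so it applies verbatim---alternatively one restates the lemma for positive operators) and scale back by $\|\tilde\rho_{AE}\|_\alpha$, obtaining
\[ \mbE_h\bigl\|(\mathcal{R}^h-\mathcal{U})(\tilde\rho_{AE})\bigr\|_\alpha \leqslant \|\tilde\rho_{AE}\|_\alpha\;M_0^{\frac2\alpha-1}\,M_1^{2(\alpha-1)/\alpha}, \]
where $M_0$ and $M_1$ are the two maxima in the lemma taken with $\mathcal{N}^h = \mathcal{R}^h-\mathcal{U}$. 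The bound $M_1 \leqslant \lambda$ is exactly the hypothesis that the family is $\lambda$-randomizing (Definition~\ref{def:rand-family})---this is the only place that assumption is used, and is the reason the definition is phrased in terms of the $2$-norm. For $M_0$: since $\mathcal{R}^h$ and $\mathcal{U}$ are trace-preserving and completely positive, $\|(\mathcal{R}^h-\mathcal{U})(\omega)\|_1 \leqslant \|\mathcal{R}^h(\omega)\|_1 + \|\mathcal{U}(\omega)\|_1 = 2$ for every state $\omega$; expanding a normal $\eta$ with $\|\eta\|_1 = 1$ in its spectral decomposition $\eta = \sum_j\lambda_j\Pi_j$ and combining the triangle inequality with $\|(\mathcal{R}^h-\mathcal{U})(\Pi_j)\|_1 \leqslant 2\tr\Pi_j$ and $\sum_j|\lambda_j|\tr\Pi_j = \|\eta\|_1 = 1$ gives $\|(\mathcal{R}^h-\mathcal{U})(\eta_{AE})\|_1 \leqslant 2$, hence $M_0 \leqslant 2$. (Getting the constant $2$ rather than $2\sqrt2$ here is what yields the exact prefactor $2^{\frac2\alpha-1}$.)

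Assembling the three displays and substituting $\|\tilde\rho_{AE}\|_\alpha = 2^{\frac{\alpha-1}{\alpha}(-H_\alpha(A|E)_{\rho|\sigma})}$, $M_0^{\frac2\alpha-1}\leqslant 2^{\frac2\alpha-1}$, and $M_1^{2(\alpha-1)/\alpha}\leqslant 2^{\frac{\alpha-1}{\alpha}\cdot 2\log\lambda}$, then collecting the base-$2$ exponents (noting that $\tfrac2\alpha-1$ and $\tfrac{2(\alpha-1)}{\alpha}$ sum to $1$, as they must for an interpolation) gives exactly the claimed bound. For the CQ case the argument is identical: $\tilde\rho_{AE}$ is CQ whenever $\rho_{AE}$ is, so by the final clause of Lemma~\ref{lem:3lines} the maximizations defining $M_0$ and $M_1$ may be restricted to CQ operators; the proof of $M_0 \leqslant 2$ is unaffected, and $M_1 \leqslant \lambda$ now follows from the definition of a $\lambda$-randomizing family of hash functions, which is precisely the same $2$-norm bound restricted to CQ inputs. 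The genuinely analytic part of the argument---the Hadamard three-line interpolation---has already been carried out in Lemma~\ref{lem:3lines}; the only points needing care here are choosing the Hölder exponents so that the $\sigma_E$-factors have unit Schatten norm (which forces $p_1=p_3=\tfrac{2\alpha}{\alpha-1}$) and tracking the exponents through to the final expression.
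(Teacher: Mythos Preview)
Your proof is correct and follows essentially the same route as the paper: H\"older to pass from the trace norm to the $\alpha$-norm, then Lemma~\ref{lem:3lines} with $\mathcal{N}^h=\mathcal{R}^h-\mathcal{U}$, then $M_1\leqslant\lambda$ from the randomizing hypothesis and $M_0\leqslant 2$ from the CPTP property. Your argument for $M_0\leqslant 2$ via the spectral decomposition of a normal operator is in fact a bit more direct than the paper's, which invokes a separate appendix lemma (Lemma~\ref{lem:normal-ops-cptp}) for the same purpose, and your observation that Lemma~\ref{lem:3lines} only uses positivity and the normalization $\|\tilde\rho\|_\alpha=1$ (not $\tr\rho=1$) is exactly what the paper also relies on implicitly when it passes to the maximizer~$\omega$.
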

\begin{proof}
    For any $h \in \mathcal{H}$, let us define the superoperator $\mathcal{N}^h := \mathcal{R}^h - \mathcal{U}$. We can now begin as follows:
    \begin{multline*}
        \mbE_h \left\| (\mathcal{R}^h - \mathcal{U})(\rho_{AE}) \right\|_1\\
        \begin{aligned}
        &= \mbE_h \left\| \mathcal{N}^h(\rho_{AE}) \right\|_1\\
                                                                             &\leqslant \left\|\sigma_E^{\frac{1-\alpha}{2\alpha}}\rho_{AE} \sigma_E^{\frac{1-\alpha}{2\alpha}} \right\|_\alpha \cdot \max_{\omega_{AE} : \left\| \sigma_E^{\frac{1-\alpha}{2\alpha}} \omega_{AE} \sigma_E^{\frac{1-\alpha}{2\alpha}} \right\|_\alpha \leqslant 1} \mbE_h \left\| \mathcal{N}^h(\omega_{AE}) \right\|_1\\
                                                                             &= 2^{\frac{1-\alpha}{\alpha} H_\alpha(A|E)_{\rho|\sigma}} \max_{\omega} \mbE_h \left\| \mathcal{N}^h(\omega_{AE}) \right\|_1\\
                                                                             &= 2^{\frac{1-\alpha}{\alpha} H_\alpha(A|E)_{\rho|\sigma}} \max_{\omega} \mbE_h \left\| \sigma_E^{\frac{\alpha-1}{2\alpha}} \mathcal{N}^h\left( \sigma_E^{\frac{1-\alpha}{2\alpha}} \omega_{AE} \sigma_E^{\frac{1-\alpha}{2\alpha}} \right) \sigma_E^{\frac{\alpha-1}{2\alpha}} \right\|_1\\
                                                                             &\stackrel{(a)}{\leqslant} 2^{\frac{1-\alpha}{\alpha} H_\alpha(A|E)_{\rho|\sigma}} \left\| \ident_C \otimes \sigma_{E}^{\frac{\alpha-1}{\alpha}} \right\|_{\frac{\alpha}{\alpha-1}} \max_{\omega} \mbE_h \left\| \mathcal{N}^h\left( \sigma^{\frac{1-\alpha}{2\alpha}}\omega_{AE} \sigma^{\frac{1-\alpha}{2\alpha}}\right) \right\|_\alpha\\
                                                                             &= 2^{\frac{\alpha-1}{\alpha} (\log|C| - H_\alpha(A|E)_{\rho|\sigma})} \max_{\omega} \mbE_h \left\| \mathcal{N}^h\left( \sigma^{\frac{1-\alpha}{2\alpha}}\omega_{AE} \sigma^{\frac{1-\alpha}{2\alpha}}\right) \right\|_\alpha\\
                                                                             &\stackrel{\text{Lemma \ref{lem:3lines}}}{\leqslant} 2^{\frac{\alpha-1}{\alpha} (\log|C| - H_\alpha(A|E)_{\rho|\sigma})} \left( \max_{\eta_{AE} : \| \eta \|_1 \leqslant 1} \mbE_h \left\| \mathcal{N}^h(\eta_{AE}) \right\|_1 \right)^{\frac{2}{\alpha}-1} \left( \max_{\eta_{AE} : \|\eta\|_2 \leqslant 1} \mbE_h \left\| \mathcal{N}^h(\eta_{AE}) \right\|_2 \right)^{2\left( \frac{\alpha-1}{\alpha} \right)}\\
                                                                             &\leqslant 2^{\frac{2}{\alpha}-1} \cdot 2^{\frac{\alpha-1}{\alpha} (\log|C| - H_\alpha(A|E)_{\rho|\sigma})}  \left( \max_{\eta_{AE} : \|\eta\|_2 \leqslant 1} \mbE_h \left\| \mathcal{N}^h(\eta_{AE}) \right\|_2 \right)^{2\left( \frac{\alpha-1}{\alpha} \right)},
        \end{aligned}
    \end{multline*}
    where $(a)$ follows from Hölder's inequality (Equation~\eqref{eqn:hoelder}), and the last inequality follows from a trivial bound of 2 on the 1-norm term that we can obtain using the triangle inequality and Lemma~\ref{lem:normal-ops-cptp}. To finish the proof, fix $\eta_{AE}$ to any normal operator such that $\| \eta \|_2 \leqslant 1$ (and choose a CQ $\eta_{AE}$ if we are in the CQ case). Then,
    \begin{align}
        \mbE_h \| \mathcal{N}^h(\eta_{AE}) \|_2 &= \mbE_h \| (\mathcal{R}^h - \mathcal{U})(\eta_{AE}) \|_2\\
                                             &\leqslant \lambda \| \eta_{AE} \|_2\\
                                             &\leqslant \lambda,
    \end{align}
    where the first inequality follows from Definition~\ref{def:rand-family}.
    This concludes the proof.
\end{proof}

\section{Sample application: entropy accumulation}\label{sec:entropy-accumulation}
In this section, we show how our main theorem can be applied to the entropy accumulation theorem (EAT)~\cite{dfr16}, which can (among other things) be used to provide security proofs for device-independent QKD protocols with nearly optimal rates. This theorem is used to lower-bound the smooth min-entropy of $n$ random variables that are produced by a sequential process. Its proof follows the outline given in the introduction: we first lower-bound the smooth min-entropy by a Rényi entropy of order $\alpha$, which is then lower-bounded by the von Neumann entropy. Choosing $\alpha \approx 1 + \frac{1}{\sqrt{n}}$ then yields the theorem.

In applications where the bound on the smooth min-entropy is then used for privacy amplification (which is the case for QKD), we can use our main theorem to bypass the smooth min-entropy entirely and directly obtain an error exponent for the trace distance to a perfect key. Furthermore, the resulting proof is overall simpler as it does not involve any smoothing parameter.

To show how this works, we quickly explain a simplified version of the EAT, and then show what happens when we plug in the intermediate statement about Rényi entropies into our main theorem. The EAT provides bounds for states that are generated by step-by-step processes of the form:

    \begin{center}
    \begin{tikzpicture}[thick]
        \draw
            (0, 0) node[porte] (m1) {$\mathcal{M}_1$}
            ++(2, 0) node[porte] (m2) {$\mathcal{M}_2$}
            ++(2, 0) node (dotdotdot) {$\cdots$}
            ++(2, 0) node[porte] (mn) {$\mathcal{M}_n$}
            (m1) ++(0, -1.2) node (a1) {$A_1, X_1, T_1$}
            (m2) ++(0, -1.2) node (a2) {$A_2, X_2, T_2$}
            (mn) ++(0, -1.2) node (an) {$A_n, X_n, T_n$}
            ;
       \draw 
            (m1) ++(-1.5, 0) edge[->] node[midway, above, etiquette] {$R_0$} (m1)
            (m1) edge[->] node[midway, above, etiquette] {$R_1$} (m2)
            (m2) edge[->] node[midway, above, etiquette] {$R_2$} (dotdotdot)
            (dotdotdot) edge[->] node[midway, above, etiquette] {$R_{n-1}$} (mn)
            (m1) edge[->] (a1)
            (m2) edge[->] (a2)
            (mn) edge[->] (an)
            ;
        \draw
            (m1) to ++(-1.5, 0) to  ++(-.5, .5) node[left] {$\rho^0_{R_0 E}$} to ++(.5, .5) coordinate (topright) to node[midway, above] {$E$} ([xshift=.5cm] topright -| mn.east) coordinate (rightwall)
            ;
        \draw[->]
            (mn) to (mn.center -| rightwall) node[above, etiquette] {$R_n$}
            ;
    \end{tikzpicture}
    \end{center}

    In the above, $A_1^n, X_1^n$ are classical systems and that $T_1^n$ are classical bits. Suppose we are interested in lower-bounding $\hmin^\varepsilon(A_1^n|X_1^n E, \wt(T_1^n) = w)$, where $\wt(T_1^n)$ denotes the Hamming weight of the bitstring $T_1^n$ (i.e.~the number of positions that are one).\footnote{In a standard CHSH-based DIQKD protocol, $A_1^n$ would be Alice's output, $X_1^n$ would be Alice's questions, and $T_i$ would indicate whether the CHSH game at position $i$ was won; we then want a bound that depends on how many games were won. The specific details of how this works are not relevant here, but see~\cite{arv16,adfrv17} for a detailed analysis of DIQKD using the EAT.} The EAT provides such a lower bound of the form:
\[ \hmin^\varepsilon(A_1^n|X_1^n E, \wt(T_1^n)=w) \geqslant n f(w) -  \sqrt{n} V \sqrt{1 - 2\log(\varepsilon \Pr[\wt(T_1^n)=w])}, \] 
where $f(\cdot)$ is a tradeoff function that tells us how much entropy we can expect to get given the probability of seeing a one on $T_i$, and $V$ is a constant. We could then use this bound in the usual statement for privacy amplification and get a bound on the trace distance to an ideal key when we apply a randomizing hash function to $A_1^n$.


To see how this picture changes when we use our theorem instead, we reproduce here a simplified version of Proposition 4.5 from~\cite{dfr16}, which is the intermediate statement of the EAT that lower-bounds the Rényi entropy instead of the smooth min-entropy:
\begin{equation}\label{eqn:eat-orig}
     H_\alpha(A_1^n | X_1^n E|\wt(T_1^n) = w) \geqslant nf(w) - n \left( \frac{\alpha-1}{4} \right) V^2 - \frac{\alpha}{\alpha-1}\log \frac{1}{\Pr[\wt(T_1^n)=w]} 
\end{equation}

If we substitute this bound into our main theorem, we get the following: 
\begin{theorem}\label{thm:eat-error-exponent}
    If $\{ \mathcal{R}^h : h \in \mathcal{H}\}$ is a 1-randomizing family of hash functions that produce $nR$ bits at the output, and if $0 < f(w) - R \leqslant \frac{V^2}{2}$, we have that:
    \[ \Pr[\wt(T_1^n) = w] \cdot \mbE_h \left\| \mathcal{R}^h(\rho_{A_1^n X_1^nE|\wt(T_1^n)=w}) - \frac{\ident}{2^{nR}} \otimes \rho_{X_1^nE|\wt(T_1^n)=w} \right\|_1 \leqslant 2 \cdot 2^{-nE(R)} \]
    where the error exponent $E(R)$ is given by:
    \[ E(R) = \frac{1}{2} \left( \frac{f(w)-R}{V} \right)^2. \]
\end{theorem}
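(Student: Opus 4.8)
The plan is to combine the main theorem (Theorem~\ref{thm:main}) with the intermediate EAT bound \eqref{eqn:eat-orig}, and then optimize over the free parameter $\alpha$. Concretely, I would first apply Theorem~\ref{thm:main} to the state $\rho_{A_1^n X_1^n E | \wt(T_1^n)=w}$ (playing the role of $\rho_{AE}$, with $A = A_1^n$ and $E = X_1^n E$), with $\lambda = 1$ (since the family is $1$-randomizing) and $\log|C| = nR$. This gives
\[
\mbE_h \left\| \mathcal{R}^h(\rho_{A_1^n X_1^n E|\wt(T_1^n)=w}) - \tfrac{\ident}{2^{nR}} \otimes \rho_{X_1^n E|\wt(T_1^n)=w} \right\|_1 \leqslant 2^{\frac{2}{\alpha}-1} \cdot 2^{\frac{\alpha-1}{\alpha}\left(nR - H_\alpha(A_1^n|X_1^n E | \wt(T_1^n)=w)\right)}.
\]

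Next I would plug in the lower bound \eqref{eqn:eat-orig} on $H_\alpha$. The crucial observation is that the term $-\frac{\alpha}{\alpha-1}\log\frac{1}{\Pr[\wt(T_1^n)=w]}$ in \eqref{eqn:eat-orig}, when multiplied by the prefactor exponent $\frac{\alpha-1}{\alpha}$ coming from Theorem~\ref{thm:main}, contributes exactly $\log\Pr[\wt(T_1^n)=w]$ to the exponent — i.e., a factor of $\Pr[\wt(T_1^n)=w]$ on the right-hand side. Moving that factor to the left recovers precisely the $\Pr[\wt(T_1^n)=w] \cdot \mbE_h\|\cdots\|_1$ on the left of the theorem statement. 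After this cancellation the bound reads
\[
\Pr[\wt(T_1^n)=w]\cdot \mbE_h\|\cdots\|_1 \leqslant 2^{\frac{2}{\alpha}-1}\cdot 2^{\frac{\alpha-1}{\alpha}\left(nR - nf(w) + n\frac{\alpha-1}{4}V^2\right)} = 2^{\frac{2}{\alpha}-1}\cdot 2^{-n\frac{\alpha-1}{\alpha}\left(f(w)-R - \frac{\alpha-1}{4}V^2\right)}.
\]
Since $\alpha \in (1,2]$ we have $\frac{2}{\alpha}-1 \leqslant 1$, so the first factor is at most $2$; this accounts for the leading constant $2$ in the claimed bound.

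It then remains to choose $\alpha$ to make the exponent $\frac{\alpha-1}{\alpha}\left(f(w)-R-\frac{\alpha-1}{4}V^2\right)$ as large as possible. Writing $\beta := \alpha - 1 \in (0,1]$ and $\Delta := f(w) - R > 0$, the exponent is $\frac{\beta}{1+\beta}\left(\Delta - \frac{\beta}{4}V^2\right)$; for small $\beta$ this is approximately $\beta\Delta - \frac{\beta^2}{4}V^2$ (dropping the $\frac{\beta}{1+\beta} \approx \beta$ correction and the cross term), which is maximized at $\beta \approx \frac{2\Delta}{V^2}$ and yields the value $\frac{\Delta^2}{V^2} - \frac{\Delta^2}{V^2} = \frac{\Delta^2}{V^2} \cdot \tfrac{1}{2} \cdot \ldots$; carrying the algebra through gives exactly $E(R) = \frac{1}{2}\left(\frac{f(w)-R}{V}\right)^2$. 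The hypothesis $0 < f(w) - R \leqslant \frac{V^2}{2}$ is exactly what guarantees the optimal $\alpha = 1 + \frac{2(f(w)-R)}{V^2}$ lies in the admissible range $(1,2]$, so the substitution into Theorem~\ref{thm:main} is legitimate; this compatibility check is the one point requiring care. The main (minor) obstacle is simply verifying that the constants work out to give clean value $\frac12\left(\frac{f(w)-R}{V}\right)^2$ rather than something messier — i.e., confirming that the $\frac{\alpha-1}{\alpha}$ versus $\alpha-1$ discrepancy and the quadratic term conspire correctly — and I would want to double-check whether the stated $E(R)$ is the exact optimum or a clean lower bound on it.
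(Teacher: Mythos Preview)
Your overall strategy is exactly the paper's: apply Theorem~\ref{thm:main} with $\lambda=1$ and $\log|C|=nR$, insert the EAT bound \eqref{eqn:eat-orig}, observe that the $\frac{\alpha}{\alpha-1}\log\frac{1}{\Pr[\wt(T_1^n)=w]}$ term cancels against the $\frac{\alpha-1}{\alpha}$ prefactor to produce exactly the $\Pr[\wt(T_1^n)=w]$ on the left, and bound $2^{\frac{2}{\alpha}-1}\leqslant 2$. All of that is correct.

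The gap is in your optimization. With $\beta:=\alpha-1$ and $\Delta:=f(w)-R$, the exponent you must \emph{lower}-bound is
\[
\frac{\beta}{1+\beta}\Bigl(\Delta - \tfrac{\beta}{4}V^2\Bigr),
\]
and your approximation $\frac{\beta}{1+\beta}\approx\beta$ goes the wrong way: since $\frac{\beta}{1+\beta}<\beta$ and the bracket is positive at the optimum, replacing $\frac{\beta}{1+\beta}$ by $\beta$ \emph{over}estimates the exponent, which yields an invalid (too strong) upper bound on the trace norm. Indeed, your heuristic gives $\max_\beta\bigl(\beta\Delta-\tfrac{\beta^2}{4}V^2\bigr)=\Delta^2/V^2$, which is \emph{twice} the claimed $E(R)=\Delta^2/(2V^2)$; so ``carrying the algebra through'' cannot give the stated value from that approximation.

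The paper resolves this with one extra relaxation you are missing: before optimizing, it uses $\frac{\alpha-1}{4}\leqslant\frac{\alpha-1}{2\alpha}$ (valid for all $\alpha\in(1,2]$) to replace the quadratic coefficient, and then sets $\beta:=\frac{\alpha-1}{\alpha}$ rather than $\alpha-1$. The exponent to minimize then becomes exactly the clean quadratic
\[
\beta\Bigl(R-f(w)+\beta\tfrac{V^2}{2}\Bigr)=\tfrac{V^2}{2}\beta^2-\Delta\beta,
\]
whose minimum over $\beta$ is $-\tfrac{\Delta^2}{2V^2}$, attained at $\beta^*=\Delta/V^2$. The range $\alpha\in(1,2]$ corresponds to $\beta\in(0,\tfrac{1}{2}]$, so the optimizer is admissible precisely when $0<\Delta\leqslant V^2/2$, matching the hypothesis. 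Thus the stated $E(R)$ is not the exact optimum of your original expression but a clean lower bound obtained via this preliminary relaxation; your suspicion on that point was correct, but you need the specific substitution $\beta=\frac{\alpha-1}{\alpha}$ (not $\alpha-1$) together with the $\frac{\alpha-1}{4}\leqslant\frac{\alpha-1}{2\alpha}$ step to make it work rigorously.
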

This statement is much nicer to work with than what we get by going through the smooth min-entropy. Note in particular that the probability of the event $\wt(T_1^n)=w$ appears in a much more natural way, and we do not have to assume a particular bound for it, as we normally do with the EAT.
\begin{proof}
    We can immediately substitute the bound from \eqref{eqn:eat-orig} into Theorem~\ref{thm:main} to get:
    \begin{multline}
        \mbE_h \left\| \mathcal{R}^h(\rho_{A_1^n X_1^nE|\wt(T_1^n)=w}) - \frac{\ident}{2^{nR}} \otimes \rho_{X_1^nE|\wt(T_1^n)=w} \right\|_1\\
        \begin{aligned}
        &\leqslant 2^{\frac{2}{\alpha}-1} \cdot 2^{\frac{\alpha-1}{\alpha}\left(nR - nf(w) + n\left( \frac{\alpha-1}{4} \right)V^2 + \frac{\alpha}{\alpha-1}\log \frac{1}{\Pr[\wt(T_1^n)=w]} \right)}\\
        &\leqslant \frac{2}{\Pr[\wt(T_1^n)=w]} \cdot 2^{n\frac{\alpha-1}{\alpha}\left(R - f(w) + \left( \frac{\alpha-1}{\alpha} \right)\frac{V^2}{2} \right)}.\label{eqn:error-exp_1}
        \end{aligned}
    \end{multline}
    All that remains to do is to optimize the exponent over $\alpha \in (1,2]$. To do this, let $\beta := \frac{\alpha-1}{\alpha}$, and note that the exponent becomes:
    \begin{align*}
        \beta\left(R - f(w) + \beta\frac{V^2}{2} \right) &= \beta^2 \frac{V^2}{2} - \beta(f(w)-R).
    \end{align*}
    It is easy to see that this expression is minimized whenever $\beta = \frac{f(w)-R}{V^2}$. Substituting this into \eqref{eqn:error-exp_1} yields the bound advertised in the theorem statement. Furthermore, $\alpha \in (1,2]$ if $0 < \beta \leqslant \frac{1}{2}$, which means that this bound applies if $0 < f(w) - R \leqslant \frac{V^2}{2}$ (i.e. only for rates close enough to the first-order term $f(w)$).
\end{proof}

\section{Discussion}\label{sec:discussion}
This result opens up a number of interesting questions for future work. The first is whether we can base all of one-shot information theory on Rényi information measures rather than smooth min/max quantities as has been done so far. This would mean doing the same as was done here for various channel coding problems, multiuser problems, state redistribution, etc. It is already possible to get part of the way there by using the results in this paper and using the ``decoupling principle'', which states that destroying correlations with one system implies the presence of correlations with the purification. If one sets things up correctly, it is possible to use a randomization procedure as defined here to destroy correlations with the channel environment and thus assert the existence of a decoder at the channel output. See~\cite{fred-these} for a comprehensive treatment of this approach to quantum information theory, and Sharma~\cite{s15} gives a Rényi version of this treatment along the lines of what we propose here, but with the caveat that the main decoupling theorem in~\cite{s15} has an additional term that makes its use outside of the iid case difficult. This is unlikely to yield optimal results in terms of error exponents, however, as one already loses a factor of 2 by using Uhlmann's theorem, as one must using this approach. It is also likely to mostly produce results that are expressed as a linear combination of Rényi entropies, instead of having the right quantity (for example, one might get $H_\alpha(A) - H_\beta(A|B)$ instead of $I_\alpha(A;B)$), though this might be mitigated by using chain rules for Rényi entropies such as those found in~\cite{d14}.

One case where this approach is likely to pay off the most is in multiuser Shannon theory problems. In these problems, the smoothing approach leads to the simultaneous smoothing problem, whereby several quantities must be smoothed at once, and we need a single state that is near-optimal for all of them. While there are ways to do this (see~\cite{s18} for example), it considerably complicates the argument, and avoiding the issue altogether using Rényi entropies is a very attractive alternative.

Another question is whether the results obtained here are optimal. Can one get a converse theorem, and if so, how tight is it? It would also be interesting to see if this result sheds any light on the question of how to define the quantum Rényi conditional mutual information (CMI). The main challenge that comes up when trying to generalize the classical definitions is that several non-commuting operators must be multiplied together, and one must choose in which order to do this. For the Rényi CMI, several possibilities have been considered (see~\cite{bsw14} for a systematic study of this problem), but so far none has proven to be fully satisfactory.

It would also be interesting to see to what extent our results can produce better results numerically in realistic DIQKD protocols. In particular, it could be naturally combined with a recent result~\cite{bff20} which defines a new family of Rényi entropies that can be computed by semidefinite programs, and which can be used to compute numerical bounds on tradeoff functions for the EAT. This would provide an ``all-Rényi'' approach to DIQKD security proofs that would likely yield very tight bounds. Note that another route to an all-Rényi approach would be to combine this result with the proof from~\cite{kzf18,kzf18-published}, which gives bounds on the smooth min-entropy generated by device-independent randomness expansion in terms of Rényi entropies---it would thus also be natural there to cut out the min-entropy and work only in terms of Rényi entropies.

Finally, one might try to extend the main theorem to $\alpha > 2$. However, it might be tempting to conjecture that this requires stronger conditions on the randomizing family of channels: the fact that we have an achievability result at $\alpha=2$ is tied to the fact that the channels can randomize the second moment of the state. Going higher might therefore require a family that randomizes higher moments.


\section*{Acknowledgments}\label{sec:acknowledgments}
The author would like to thank Joseph M.~Renes and Andreas Winter as well as the anonymous referees for their helpful comments and suggestions.

\appendix

\section{Technical lemmas}\label{sec:technical-lemmas}
\begin{lemma}\label{lem:normal-ops-cptp}
    Let $\mathcal{N}_{A  \rightarrow  B}$ be a trace non-increasing, completely positive map, and let $M \in \Normal(\sfA)$ be a normal operator. Then,
    \[ \| \mathcal{N}(M) \|_1 \leqslant \| M \|_1. \]
\end{lemma}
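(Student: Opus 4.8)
The plan is to reduce everything to the spectral decomposition of the normal operator $M$, and then exploit two elementary properties of $\mathcal{N}$ — positivity (from complete positivity) and trace non-increase — together with the triangle inequality for the trace norm. First I would invoke the spectral theorem for normal operators to write $M = \sum_i \lambda_i P_i$, where the $\lambda_i \in \mbC$ are the distinct eigenvalues of $M$ and the $P_i \in \Pos(\sfA)$ are mutually orthogonal projectors. Since $M^{\dagger} M = \sum_i |\lambda_i|^2 P_i$, we get $(M^{\dagger}M)^{1/2} = \sum_i |\lambda_i| P_i$, and hence
\[ \|M\|_1 = \tr\left[(M^{\dagger}M)^{1/2}\right] = \sum_i |\lambda_i|\,\tr[P_i]. \]

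Next, using linearity of $\mathcal{N}$ and the triangle inequality for $\|\cdot\|_1$,
\[ \|\mathcal{N}(M)\|_1 = \left\| \sum_i \lambda_i\, \mathcal{N}(P_i) \right\|_1 \leqslant \sum_i |\lambda_i|\, \|\mathcal{N}(P_i)\|_1 . \]
Because $\mathcal{N}$ is completely positive (hence positive) and each $P_i \geqslant 0$, we have $\mathcal{N}(P_i) \in \Pos(\sfB)$, so $\|\mathcal{N}(P_i)\|_1 = \tr[\mathcal{N}(P_i)]$; and because $\mathcal{N}$ is trace non-increasing, $\tr[\mathcal{N}(P_i)] \leqslant \tr[P_i]$. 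Substituting these two facts gives $\|\mathcal{N}(M)\|_1 \leqslant \sum_i |\lambda_i|\,\tr[P_i] = \|M\|_1$, as claimed.

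There is no substantial obstacle: the only step deserving care is the identity $\|M\|_1 = \sum_i |\lambda_i|\,\tr[P_i]$, and this is precisely where normality of $M$ enters — the singular values of a normal operator are the moduli of its eigenvalues — which is why the decomposition into mutually orthogonal projectors (rather than an arbitrary, less efficient decomposition into positive parts) gives exactly the constant $1$ and why the hypothesis cannot be dropped.
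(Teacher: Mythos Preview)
Your proof is correct. It is also considerably more elementary than the paper's argument, so it is worth contrasting the two.

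The paper proceeds via the polar decomposition $M = UD$ with $U$ unitary, $D \geqslant 0$, and $[U,D]=0$ (which exists precisely because $M$ is normal). It then lifts $M$ to the bipartite operator $U D^{1/2} \Phi_{AA'} D^{1/2}$ with $\Phi$ the unnormalized maximally entangled state, uses the variational formula $\|X\|_1 = \sup_{\|V\|_\infty \leqslant 1} |\tr[VX]|$, and applies the transpose trick to move $U$ onto the ancilla. This reduces the problem to $\|\mathcal{N}(D^{1/2}\Phi D^{1/2})\|_1$, which is a trace of a positive operator and hence bounded by $\tr[D] = \|M\|_1$ via the trace non-increasing property.

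Your route bypasses all of this machinery: you simply split $M$ along its spectral projectors, use the triangle inequality termwise, and invoke positivity plus trace non-increase on each projector. The key observation---that normality makes the singular values equal to $|\lambda_i|$, so the triangle inequality is tight at the level of $\|M\|_1$---is exactly the right one, and you state it clearly. What the paper's approach might offer is a template that extends to settings where one wants to avoid explicit spectral decompositions (e.g.\ operator-valued or infinite-dimensional variants), but for the finite-dimensional statement at hand your argument is both shorter and more transparent.
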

\begin{proof}
    First note that $M$ can be decomposed as $M = UD$ for a unitary $U$ and positive semidefinite $D$ such that $U$ and $D$ commute. To see this, write the spectral decomposition of $M$ as
    \begin{align*}
        M = \sum_i \lambda_i e^{i\varphi_i} \proj{\psi_i}
    \end{align*}
    where the $\lambda_i$ and $\varphi_i$ are positive reals. Then,
    \begin{align*}
        D &:= \sum_i \lambda_i \proj{\psi_i}\\
        U &:= \sum_i e^{i\varphi_i} \proj{\psi_i}.
    \end{align*}
    Now, let $\Phi_{AA'} := \sum_{ij} \ket{i i} \bra{jj}$. We then have that
    \begin{align*}
        \| \mathcal{N}(M) \|_1 &= \| \tr_{A'}[\mathcal{N}(U D^{\frac{1}{2}} \Phi D^{\frac{1}{2}})] \|_1\\
                               &= \sup_{V_A : \| V_A \|_\infty \leqslant 1} \left| \tr\left[ V_A \mathcal{N}(U D^{\frac{1}{2}} \Phi D^{\frac{1}{2}}) \right] \right|\\
                               &= \sup_{V_A} \left| \tr\left[ (V_A \otimes U^*_{A'}) \mathcal{N}(D^{\frac{1}{2}} \Phi D^{\frac{1}{2}}) \right] \right|\\
                               &\leqslant \sup_{V_{AA'} : \| V \|_\infty \leqslant 1} \left| \tr\left[ V \mathcal{N}(D^{\frac{1}{2}} \Phi D^{\frac{1}{2}}) \right] \right|\\
                               &= \| \mathcal{N}(D^{\frac{1}{2}} \Phi D^{\frac{1}{2}}) \|_1\\
                               &= \tr[ \mathcal{N}(D^{\frac{1}{2}} \Phi D^{\frac{1}{2}}) ]\\
                               &= \tr[ \mathcal{N}(D)]\\
                               &\leqslant \tr[D]\\
                               &\leqslant \| M \|_1.
    \end{align*}
\end{proof}

\printbibliography

\end{document}